\newtheorem{definition}{Definition}
\newtheorem{theorem}{Theorem}
\newtheorem{corollary}{Corollary}
\newtheorem{remark}{Remark}
\newtheorem{lemma}{Lemma}
\newtheorem{example}{Example}
\newtheorem{property}{Property}
\numberwithin{equation}{section}
\journal{}
\begin{document}

\begin{frontmatter}

\title{Total Positivity of A Kind of Generalized Toric-Bernstein Basis}

\author{Ying-Ying Yu}
\author{Hui Ma}
\author{Chun-Gang Zhu \corref{cor1}}
\cortext[cor1]{Corresponding author.}
\ead{cgzhu@dlut.edu.cn}
\address{School of Mathematical Sciences, Dalian University of Technology, Dalian 116024, China}



\begin{abstract}
 {{The normalized totally positive bases are widely used in many fields. }} Based on the generalized Vandermonde determinant, the normalized total positivity of a kind of generalized toric-Bernstein basis is proved, which is defined on a set of real points. {By this result,  the progressive iterative approximation property of the generalized toric-B\'{e}zier curve is obtained.}
\end{abstract}

\begin{keyword}
totally positive matrix, totally positive basis, Benstein basis function, rational basis function.
\MSC 65D17\sep 15B48 \sep 41A15
\end{keyword}
\end{frontmatter}


\section{Introduction}
Total positivity (TP) is an important and powerful concept which often occurs in many subjects, such as statistics \cite{Goodman1986A,Heiligers1994Totally}, mathematical biology \cite{Mariano1996Total}, combinatorics\cite{Berenstein1996Parametrizations,Brenti1995Combinatorics}, dynamics \cite{Gantmacher2002Oscillation}, approximation theory \cite{Price1968Monotone,CARNICER1994633}, operator theory \cite{Sobolev1975Totally}, and geometry \cite{Sturmfels1988Totally}. In addition, we often see TP in the areas of graph theory, algebraic geometry, stochastic process, game theory, matroid, differential equation and representation theory. In the matrix theory, the mathematician who first considered the TP matrix was I. J. Schoenberg \cite{Schoenberg1930}, and his initial research interest was the number of real roots of polynomials, which pushed him to study the variation diminishing transformation, PF sequence, kernel and spline function. The TP matrix is a matrix whose all minors are non-negative \cite{ANDO1987165}. Using the concept of TP matrix, the normalized totally positive (NTP) basis is defined. The NTP basis is a set of basis functions defined on the parameter domain $\Xi$, which satisfies the properties of non-negative and partition of the unit, meanwhile every collocation matrix of this basis on arbitrary increasing real sequence is a TP matrix \cite{Karlin,Cambridge}. When they are applied in computer aided geometric design (CAGD), many shape preserving properties are obtained, such as convexity preserving property and variation diminishing property \cite{CARNICER1994633,CARNICER1996365}. Lin et. al proved that the curves and surfaces generated by NTP basis have the property of progressive iterative approximation (PIA) \cite{LIN2005575}, which has important applications in offset curve fitting and approximation \cite{Zhang2014} and have been researched for many times.

{ In 2010, Lin \cite{LIN} proposed local PIA method with corresponding iterative format and then proved the convergence. Local PIA has the great flexibility since we can selectively insert data points. A weighted PIA for NTP bases was introduced by Lu \cite{LU} in 2010, meanwhile accelerate the convergence of PIA and gave out the optimal weight. PIA can be applied to part of non-NTP bases by two kinds of generalized PIA proposed by Chen and Wang et. al \cite{Chen} in 2012. In 2018, Zhang et al. \cite{Zhang} studied the geometric iterative method of generalized B-spline curves/surfaces with different weights, and illustrated the flexibility of the method in three aspects. Lin \cite{LIN2} and Lin et al. \cite{LIN3} summarized the geometric iterative methods and their applications.}


In 2002, Krasauskas \cite{Krasauskas2002} proposed a new method for polygonal surface modeling based on toric ideals and toric varieties defined by a given set of integer lattice points, in which basis is called toric-Bernstein (or toric-B\'ezier) basis. When the lattice is restricted to one-dimensional integer points, toric-Bernstein basis degenerates into one-dimensional toric-Bernstein basis, which is the classical Bernstein basis in fact. In 2015, Postinghel et. al \cite{Postinghel2015Degenerations} proposed a class of irrational toric varieties, inspired by which Li et. al \cite{Li,Li1} defined a kind of generalized toric-Bernstein basis ({GT}-Bernstein basis) using a set of real points, and then applied it in the design of curve/surface which is called generalized toric-B\'{e}zier curve/surface ({GT}-B\'{e}zier curve/surface). The main purpose of this paper is to prove that rational {GT}-Bernstein basis is NTP basis.

The rest of paper is organized as follows. In Section 2, we introduce the definition of {GT}-Bernstein basis, rational {GT}-Bernstein basis, and the definition of TP matrix and TP basis. In Section 3, we prove the TP property of rational {GT}-Bernstein basis based on the result of the generalized Vandermonde determinant. Then, in Section 4 we briefly introduce the progressive iterative approximation property of {GT}-B\'{e}zier curve and give some examples. At last, we conclude the paper and propose the work in the future in Section 5.

\section{Preliminaries}

\subsection{Generalized toric-Bernstein basis}
Toric geometry is mainly a theoretical study on toric varieties, toric ideals and related issues developed gradually from the 1970s, and becomes a hot topic of algebraic geometry \cite{Sottile2008Toric,Gronber}. In 2002, Krasauskas \cite{Krasauskas2002} proposed a kind of polygonal surface called toric surface, whose essence is the projection of toric varieties from high-dimensional projective space to low-dimensional affine space \cite{Sottile2008Toric,Gronber}. The basis for toric surface {construction} is defined on a set of integer lattice points and called toric-Bernstein (or toric-B\'ezier) basis. When the lattice is restricted to one-dimensional, the basis degenerates into one-dimensional toric-Bernstein basis, which is the classical Bernstein basis after the parameterization.
\begin{definition}[\cite{Krasauskas2002}]
Let $Q={\{a_{0},\cdots,a_{n}\}}\subset \mathbb{Z}$ be a set of finitely integer points, where $a_{0}< a_{1}< \cdots < a_{n-1}< a_{n}$. For $i\in\{0,1,\cdots,n\}$, we define a basis function corresponding to the point $a_{i}$ of $Q$:
\begin{equation*}
\beta_{a_{i}}(t)=c_{a_{i}}h_{0}(t)^{h_{0}(a_{i})}h_{1}(t)^{h_{1}(a_{i})}, t\in[a_{0},a_{n}],
\end{equation*}
where coefficient $c_{a_{i}}>0$ and
$$h_{0}(t)=t-a_{0}, h_{1}(t)=a_{n}-t.$$
\end{definition}

The set of functions $\{\beta_{a_{i}}(t)\}$ is called toric-Bernstein basis, and we can see that every basis function satisfies $\beta_{a_{i}}(t)\geqslant0, t\in[a_{0},a_{n}]$.

\begin{remark}
If we set $a_{i}=i$ $(i=0,1,\cdots,n)$, then
$$\beta_{a_{i}}(t)=\beta_{i}(t)=c_{i}t^{i}(n-t)^{n-i}, i=0,1,\cdots,n.$$
Let $t=nx$ and $c_{i}=\frac{1}{n^{n}}\binom{n}{i}$, then we have
\begin{equation*}
\beta_{i}(t)=c_{i}t^{i}(n-t)^{n-i}=c_{i}n^{n}x^{i}(1-x)^{n-i}=\binom{n}{i}x^{i}(1-x)^{n-i}=B^{n}_{i}(x)
\end{equation*}
for $i=0,1,\cdots,n$. It can be seen that the toric-Bernstein basis is the generalization of Bernstein basis. And then the NTP property of the basis is determined by that of Bernstein basis directly.
\end{remark}

In 2015, Postinghel et. al proposed a class of irrational toric varieties, which is defined by a set of finitely real points. Inspired by their work, Li et. al \cite{Li,Li1} defined a kind of {GT}-Bernstein basis using a set of real points and then applied it in the curve and surface design.

\begin{definition}[\cite{Li,Li1}]\label{definition:2}
Let $S={\{a_{0},\cdots,a_{n}\}}\subset \mathbb{R}$ be a finite set, where $a_{0}\leqslant a_{1}\leqslant \cdots \leqslant a_{n-1}\leqslant a_{n}$, and $a_{0}<a_{n}$. For any point $a_{i}$ in the set $S$, define the function:
\begin{equation*}
\beta_{a_{i}}(t)=c_{a_{i}}h_{0}(t)^{h_{0}(a_{i})}h_{1}(t)^{h_{1}(a_{i})}, t\in[a_{0},a_{n}],
\end{equation*}
where the coefficient $c_{a_{i}}>0$, and
$$ h_{0}(t)=l_{0}(t-a_{0}), h_{1}(t)=l_{1}(a_{n}-t), l_{0}>0,l_{1}>0.$$
\end{definition}

The set of functions $\{\beta_{a_{i}}(t)\}$ is called {GT}-Bernstein basis. And the function satisfies $\beta_{a_{i}}(t)\geqslant0, t\in[a_{0},a_{n}]$. The points of set $S$ are called nodes.

\begin{remark}
Although {a} {GT}-Bernstein basis $\{\beta_{a_{i}}(t)\}$ depends on the selection of $l_{0},l_{1}$, the {generalized toric-B\'ezier curve (GT-B\'ezier curve)} defined by it depends only on the set $S$ and independent of the selection of $l_{0},l_{1}$ \cite{Li,Li1}. In the rest of the paper, we assume that $l_{0}=l_{1}=l>0$.
\end{remark}

For $a_{i}\in S$, the rational {GT}-Bernstein basis function is defined as:
\begin{equation}
\mathcal{T}_{a_{i}}(t)=\frac{\omega_{a_{i}}\beta_{a_{i}}(t)}{\sum_{i=0}^{n}\omega_{a_{i}}\beta_{a_{i}}(t)}, t\in[a_{0},a_{n}],
\label{align 2.1}
\end{equation}
where $\omega_{a_{i}}>0$ is weight.
Obviously, rational {GT}-Bernstein basis has the properties of non-negative and partition of the unit.

\subsection{Totally positive matrix and totally positive basis}
Let matrix $X=[x_{ij}]$ be a real matrix, where
$$X\left(
\begin{matrix}
i_{1}& i_{2}& \cdots  &i_{k}\\
j_{1}& j_{2}& \cdots  &j_{k}
\end{matrix}
\right)$$
denote the minor matrix of matrix $X$ of order $k$ composed of the $i_{1},i_{2},\cdots,i_{k}$ $({1}\leqslant i_{1}<i_{2}<\cdots<i_{k})$ rows and the $j_{1},j_{2},\cdots,j_{k}$ $({1}\leqslant j_{1}<j_{2}<\cdots<j_{k})$ columns in $X$. Particularly, when $i_{k}=j_{k}={k}$, $X(i,j)$ is the ordinal principal minor of $X$.

\begin{definition}[\cite{Karlin}]\label{definition:3}
An $n\times m$-order real matrix $X$ is called a TP matrix if and only if all of its minors are non-negative, while $X$ is called a {strictly totally positive (STP)} matrix if and only if all of its minors are positive.
\end{definition}

TP matrix has a lot of properties, and many operators can maintain TP property. There are also some methods to construct new TP matrix with TP matrices we have already known. According to the definition of TP matrix and Cauchy-Binet formula, we have:

\begin{property}[\cite{Karlin}]\label{property:1}
Multiply any row (column) of a TP (STP) matrix by a positive scalar, it's still a TP (STP) matrix.
\end{property}

Checking the TP  property of a matrix by Definition \ref{definition:3} is very complicated because the minors of arbitrary order $k$ needed to be calculated. In fact, the TP  property of a matrix is only depended on the TP  property of its minors consisting of the consecutive rows and columns in it.

\begin{theorem}[\cite{Karlin}]
Let $X$ be an $n\times m$-order matrix. Then $X$ is a TP (STP) matrix if all its minors of order $k$ consisting of the consecutive rows and columns in it are TP (STP) matrices, where $k=1,2,\cdots,\min\{n,m\}$.
\end{theorem}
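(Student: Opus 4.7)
The plan is to proceed by induction on the order $k$ of the minors. The base case $k=1$ is immediate: every entry of $X$ is itself a $1\times 1$ minor on a (trivially consecutive) single row and single column, so the hypothesis directly gives non-negativity (positivity) of every entry.

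For the inductive step, I would assume that every minor of order strictly less than $k$ is non-negative (positive); this already follows by applying the outer induction to each smaller submatrix, whose consecutive-index minors are themselves consecutive-index minors of $X$. To control an arbitrary $k$-th order minor on rows $I=\{i_1<\cdots<i_k\}$ and columns $J=\{j_1<\cdots<j_k\}$, I would introduce the defect
\[
d(I,J)=(i_k-i_1-(k-1))+(j_k-j_1-(k-1)),
\]
which counts the ``holes'' in $I$ and $J$ and vanishes exactly when both index sets are consecutive. The case $d(I,J)=0$ is precisely the hypothesis. For $d(I,J)>0$ I would run a secondary induction on $d$: without loss of generality the rows have a gap, $i_{r+1}>i_{r}+1$ for some $r$, and I would apply a Sylvester--Desnanot--Jacobi (Dodgson-style) determinant identity to a suitably chosen $(k+1)\times(k+1)$ submatrix whose row set enlarges $I$ by an index strictly between $i_r$ and $i_{r+1}$ and whose column set extends $J$ analogously. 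The resulting identity, schematically of the form
\[
\det(X[I,J])\cdot\det(X[I_0,J_0])=\det(X[I_1,J_1])\det(X[I_2,J_2])-\det(X[I_3,J_3])\det(X[I_4,J_4]),
\]
couples $\det(X[I,J])$ to minors all of which have strictly smaller defect than $d(I,J)$, so each is controlled by the inner induction hypothesis.

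The main obstacle will be resolving the sign ambiguity produced by the subtraction on the right-hand side of the Sylvester identity. My preferred route is to establish the STP case first, where every minor on the right is strictly positive, one may divide by $\det(X[I_0,J_0])>0$, and by telescoping the hole one index at a time one reduces matters to strict positivity of the initial (consecutive) minors. Once the STP case is in hand, the TP case follows by a standard perturbation argument: I would replace $X$ by $X+\varepsilon Y$, where $Y$ is a fixed STP matrix (for instance a Cauchy--Vandermonde matrix), verify that the perturbed matrix satisfies the strict consecutive-minor hypothesis for every $\varepsilon>0$ by continuity, and then let $\varepsilon\to 0^{+}$ to recover non-negativity of every minor of $X$.
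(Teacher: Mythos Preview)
The paper does not prove this theorem at all; it is simply quoted from Karlin with no accompanying argument, so there is no ``paper's own proof'' to compare your proposal against. I can still comment on the proposal itself.

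For the STP half your plan is essentially Fekete's classical argument: induct on the total gap count $d(I,J)$ and use a Sylvester/Desnanot--Jacobi identity to relate $\det X[I,J]$ (times a lower-order minor, positive by the outer induction) to $k\times k$ minors of strictly smaller defect. Your sketch is vague about exactly which $(k+1)\times(k+1)$ block to form and how the subtraction on the right-hand side is resolved, but those details can be filled in along standard lines, and the overall strategy is sound.

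The TP half, however, has a genuine gap---and in fact the TP statement as literally written is false. Your perturbation step asserts that $X+\varepsilon Y$ satisfies the \emph{strict} consecutive-minor hypothesis ``by continuity'', but continuity only tells you the consecutive minors of $X+\varepsilon Y$ are close to those of $X$, which are merely $\ge 0$; nothing forces them to become positive for small $\varepsilon$. No choice of $Y$ can repair this, because the TP conclusion itself fails. Take
\[
X=\begin{pmatrix}1&0&2\\0&0&0\\2&0&1\end{pmatrix}.
\]
Every entry is non-negative, all four $2\times 2$ minors on consecutive rows and consecutive columns vanish, and $\det X=0$, so the consecutive-minor hypothesis is satisfied; yet the minor on rows $\{1,3\}$ and columns $\{1,3\}$ equals $1\cdot 1-2\cdot 2=-3<0$, so $X$ is not TP. Thus only the STP half of the stated theorem (Fekete's lemma) is correct, and you should not attempt to prove the TP half as formulated.
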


Using the concept of TP matrix, the NTP basis is defined as follow.

\begin{definition}[\cite{LIN2005575}]
The basis $\{B_{0}(t),B_{1}(t),\cdots B_{n}(t)\}$ defined on parameter domain $\Xi\subseteq \mathbb{R}$ is called NTP basis if and only if:
\begin{enumerate}[(i)]
\item The basis is non-negative;
\item The basis is the partition of the unit;
\item The collocation matrix of the basis under an arbitrary set of strictly monotonically increasing real numbers $t_{0}<t_{1}<\cdots<t_{n}$ of $\Xi$ is a TP matrix, that is
\begin{equation*}
\mathcal{\mathcal{M}}(^{B_{0},\cdots,B_{n}}_{t_{0},\cdots,t_{n}})=(B_{j}(t_{i}))^{i=0,1,\cdots,n}_{j=0,1,\cdots,n}
\end{equation*}
is a TP matrix.
\end{enumerate}
\end{definition}

\section{The total positivity of rational generalized toric-Bernstein basis}
 If the points of $S$ are integers, then the {GT-Bernstein} basis is NTP from the NTP property of classical Bernstein basis, which can be applied to prove the PIA property of B\'ezier curves. In this section, we will show that rational {GT}-Bernstein basis defined on arbitrary real points in Definition \ref{definition:2} is NTP basis too, which can be applied for geometric modeling.


\begin{theorem}\label{theorem:2}
The TP property of the collocation matrix
$$B=[\beta_{a_{j}}(t_{i})]^{i=0,1,\cdots n}_{j=0,1,\cdots n}$$
of {GT}-Bernstein basis $\{\beta_{a_{i}}(t)|i=0,1,\cdots,n\}$ on an increasing sequence $a_{0}<t_{0}<t_{1}<\cdots<t_{n}<a_{n}$ is equivalent to the TP property of matrix
$$A=
\left[
\begin{matrix}
x_{0}^{lk_{0}}&  x_{0}^{lk_{1}}&  \cdots   &x_{0}^{lk_{n}}\\
x_{1}^{lk_{0}}&  x_{1}^{lk_{1}}&  \cdots   &x_{1}^{lk_{n}}\\
\vdots&\vdots&\ddots&\vdots\\
x_{n}^{lk_{0}}&  x_{n}^{lk_{1}}&  \cdots   &x_{n}^{lk_{n}}
\end{matrix}
\right],
$$
where $l_{0}=l_{1}=l>0$, $k_{i}=a_{i}-a_{0}$, $i=0,1,\cdots,n$.
\end{theorem}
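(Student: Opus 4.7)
The plan is to reduce the collocation matrix $B$ to the matrix $A$ by stripping off positive diagonal factors on the left and on the right, and then invoke Property \ref{property:1}. First I would write the entries of $B$ explicitly by substituting the definitions of $h_0, h_1$ together with $l_0 = l_1 = l$, obtaining
\begin{equation*}
B_{ij} = \beta_{a_j}(t_i) = c_{a_j}\, l^{\,l k_n}\,(t_i - a_0)^{l k_j}\,(a_n - t_i)^{l(k_n - k_j)},
\end{equation*}
where the exponents of $l$ have been collected using $k_j + (k_n - k_j) = k_n$. All the $j$-dependent constants sit in the column factor $c_{a_j} l^{l k_n}$, so they can be pulled out later on the right.

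Next I would normalise each row by factoring $(a_n - t_i)^{l k_n}$ out, since this quantity is positive on the open interval $(a_0, a_n)$ to which every $t_i$ belongs. This gives
\begin{equation*}
B_{ij} = (a_n - t_i)^{l k_n}\cdot x_i^{\,l k_j}\cdot c_{a_j}\, l^{\,l k_n},\qquad x_i := \frac{t_i - a_0}{a_n - t_i}.
\end{equation*}
The map $t\mapsto (t-a_0)/(a_n-t)$ has derivative $(a_n-a_0)/(a_n-t)^2 > 0$ on $(a_0,a_n)$, so the hypothesis $a_0 < t_0 < \cdots < t_n < a_n$ forces $0 < x_0 < x_1 < \cdots < x_n$; thus the matrix $A = [x_i^{l k_j}]$ in the statement is indeed indexed by a strictly increasing positive sequence, which is the setting in which one wants its total positivity discussed.

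From the factorisation above, $B = D_1\, A\, D_2$ where $D_1 = \operatorname{diag}\bigl((a_n - t_i)^{l k_n}\bigr)$ and $D_2 = \operatorname{diag}(c_{a_j} l^{l k_n})$ both have strictly positive diagonals. By Property \ref{property:1}, multiplying the rows of $A$ by the positive scalars on $D_1$ preserves totally positivity in both directions, and the same holds for multiplying the columns by the positive scalars on $D_2$. Therefore $B$ is TP if and only if $A$ is TP, which is the claim. The proof is essentially bookkeeping; the only point that requires a line of justification is the strict monotonicity of $t\mapsto (t-a_0)/(a_n-t)$, needed to guarantee that the reduction lands in the standard form of a generalised Vandermonde matrix with ordered positive nodes (which is what Section~3 goes on to analyse).
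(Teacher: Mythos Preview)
Your argument is correct and follows the same route as the paper: rewrite $\beta_{a_j}(t_i)$ so that the $i$-dependence sits in a positive row factor $(a_n-t_i)^{lk_n}$ and the $j$-dependence in a positive column factor $c_{a_j}l^{lk_n}$, leaving the generalized Vandermonde entry $x_i^{lk_j}$, and then appeal to Property~\ref{property:1}. Your write-up is in fact tidier than the paper's (you track the constants correctly and make the diagonal factorisation $B=D_1AD_2$ explicit), and your remark on the strict monotonicity of $t\mapsto(t-a_0)/(a_n-t)$, while not needed for Theorem~\ref{theorem:2} itself, is exactly what the paper uses implicitly when it later applies Lemma~\ref{lemma:1}.
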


\begin{proof}
Let the collocation matrix be $B=[B_{0},B_{1},\cdots,B_{n}]_{(n+1)\times(n+1)}$, where
$$B_{j}=
\left[
\begin{matrix}
c_{a_{j}}l_{0}(t_{0}-a_{0})^{l_{0}(a_{j}-a_{0})}l_{1}(a_{n}-t_{0})^{l_{1}(a_{n}-a_{j})}\\
c_{a_{j}}l_{0}(t_{1}-a_{0})^{l_{0}(a_{j}-a_{0})}l_{1}(a_{n}-t_{1})^{l_{1}(a_{n}-a_{j})}\\
\vdots\\
c_{a_{j}}l_{0}(t_{n}-a_{0})^{l_{0}(a_{j}-a_{0})}l_{1}(a_{n}-t_{n})^{l_{1}(a_{n}-a_{j})}\\
\end{matrix}
\right]_{(n+1)\times1}
$$
for $j=0,1,\cdots,n$.

The $(i+1,j+1)-th$ element $\beta_{a_{j}}(t_{i})$ in matrix $B$ can be expressed as :
\begin{align*}
\beta_{a_{j}}(t_{i})&=c_{a_{i}}l_{0}(t_{i}-a_{0})^{l_{0}(a_{j}-a_{0})}l_{1}(a_{n}-t_{i})^{l_{1}(a_{n}-a_{j})}\\
&=c_{a_{i}}l_{0}l_{1}(a_{n}-t_{i})^{l_{1}(a_{n}-a_{0})}(a_{n}-t_{i})^{-l_{1}(a_{j}-a_{0})}(t_{i}-a_{0})^{l_{0}(a_{j}-a_{0})}.
\end{align*}

For $l_{0}=l_{1}=l>0$, we have
\begin{equation*}
\beta_{a_{j}}(t_{i})=c_{a_{i}}l^{2}(a_{n}-t_{i})^{l(a_{n}-a_{0})}\left(\frac{t_{i}-a_{0}}{a_{n}-t_{i}}\right)^{l(a_{j}-a_{0})}.
\end{equation*}

Let $x_{i}=\frac{t_{i}-a_{0}}{a_{n}-t_{i}}$, $k_{i}=a_{i}-a_{0}$, $i=0,1,\cdots,n$, then we have
$$A=
\left[
\begin{matrix}
x_{0}^{lk_{0}}&  x_{0}^{lk_{1}}&  \cdots   &x_{0}^{lk_{n}}\\
x_{1}^{lk_{0}}&  x_{1}^{lk_{1}}&  \cdots   &x_{1}^{lk_{n}}\\
\vdots&\vdots&\ddots&\vdots\\
x_{n}^{lk_{0}}&  x_{n}^{lk_{1}}&  \cdots   &x_{n}^{lk_{n}}
\end{matrix}
\right].
$$

It can be known from the Property \ref{property:1} that the TP property of the matrix $B$ is equivalent to that of the matrix $A$.
\end{proof}

\begin{corollary}\label{corollary:1}
The TP property of the collocation matrix
$$C=[\mathcal{T}_{a_{j}}(t_{i})]^{i=0,1,\cdots n}_{j=0,1,\cdots n}$$
of rational {GT}-Bernstein basis $\{\mathcal{T}_{a_{i}}(t)|i=0,1,\cdots,n\}$ on an increasing sequence $a_{0}<t_{0}<t_{1}<\cdots<t_{n}<a_{n}$ is equivalent to the TP property of matrix A.
\end{corollary}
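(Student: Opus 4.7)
The plan is to reduce Corollary \ref{corollary:1} to Theorem \ref{theorem:2} by exhibiting $C$ as a two-sided positive diagonal scaling of the matrix $B$ from Theorem \ref{theorem:2}, and then invoking Property \ref{property:1} together with that theorem.

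First I would write out the $(i+1,j+1)$ entry of $C$ using the definition \eqref{align 2.1} of the rational GT-Bernstein basis, namely
\begin{equation*}
\mathcal{T}_{a_{j}}(t_{i})=\frac{\omega_{a_{j}}\,\beta_{a_{j}}(t_{i})}{\sum_{k=0}^{n}\omega_{a_{k}}\beta_{a_{k}}(t_{i})}.
\end{equation*}
Setting $d_{i}=\bigl(\sum_{k=0}^{n}\omega_{a_{k}}\beta_{a_{k}}(t_{i})\bigr)^{-1}$ for $i=0,1,\dots,n$ and $e_{j}=\omega_{a_{j}}$ for $j=0,1,\dots,n$, the hypotheses $\omega_{a_{k}}>0$ and $\beta_{a_{k}}(t_{i})\ge 0$ (with at least one strict inequality, since $t_{i}\in(a_{0},a_{n})$) guarantee $d_{i}>0$ and $e_{j}>0$. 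Consequently
\begin{equation*}
C=\mathrm{diag}(d_{0},\dots,d_{n})\,B\,\mathrm{diag}(e_{0},\dots,e_{n}).
\end{equation*}

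Next I would apply Property \ref{property:1} twice: multiplying the $i$-th row of $B$ by the positive scalar $d_{i}$ preserves total positivity in both directions, and likewise multiplying the $j$-th column by the positive scalar $e_{j}$; moreover, because each $d_{i},e_{j}>0$, the reverse implications also hold (divide by the same positive scalars). Hence $C$ is TP if and only if $B$ is TP.

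The final step is to chain this with Theorem \ref{theorem:2}, which already asserts that $B$ is TP if and only if $A$ is TP, yielding the equivalence of the TP properties of $C$ and $A$. I do not anticipate any real obstacle here: the only point that requires a brief check is the strict positivity of the normalizing denominators $d_{i}^{-1}$, which is immediate from $a_{0}<t_{i}<a_{n}$ and the positivity of the weights $\omega_{a_{k}}$ and coefficients $c_{a_{k}}$.
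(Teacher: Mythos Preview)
Your proposal is correct and follows essentially the same approach as the paper: the paper sets $b_{i}=\sum_{j=0}^{n}\omega_{a_{j}}\beta_{a_{j}}(t_{i})>0$, writes the columns of $C$ explicitly, and then says that after extracting the common (positive) factors the TP property of $C$ reduces to that of $B$, whence to $A$ by Theorem \ref{theorem:2}. Your version is a bit more explicit in writing $C=\mathrm{diag}(d_{i})\,B\,\mathrm{diag}(e_{j})$ and in justifying the positivity of the scalings, but the argument is the same.
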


\begin{proof}
Let $b_{i}=\sum_{j=0}^{n}\omega_{a_{j}}\beta_{a_{j}}(t_{i})>0$, $i=0,1,\cdots,n$. The collocation matrix of rational {GT}-Bernstein basis is $C=[C_{0},C_{1},\cdots,C_{n}]_{(n+1)\times(n+1)}$, where
$$C_{j}=
\left[
\begin{matrix}
\frac{\omega_{a_{j}}c_{a_{j}}l_{0}(t_{0}-a_{0})^{l_{0}(a_{j}-a_{0})}l_{1}(a_{n}-t_{0})^{l_{1}(a_{n}-a_{j})}}{b_{0}}\\
\frac{\omega_{a_{j}}c_{a_{j}}l_{0}(t_{1}-a_{0})^{l_{0}(a_{j}-a_{0})}l_{1}(a_{n}-t_{1})^{l_{1}(a_{n}-a_{j})}}{b_{1}}\\
\vdots\\
\frac{\omega_{a_{j}}c_{a_{j}}l_{0}(t_{n}-a_{0})^{l_{0}(a_{j}-a_{0})}l_{1}(a_{n}-t_{n})^{l_{1}(a_{n}-a_{j})}}{b_{n}}\\
\end{matrix}
\right]_{(n+1)\times1}
$$
for $j=0,1,\cdots,n$.

Extracted the common factor, the TP property of matrix $C$ is equivalent to that of matrix $B$. And then according to Theorem \ref{theorem:2}, the TP property of matrix $C$ is equivalent to that of matrix $A$.
\end{proof}

Let $n\geqslant1$, $\alpha:=(\alpha_{0},\alpha_{1},\cdots,\alpha_{n})$, $t:=(t_{0},t_{1},\cdots,t_{n})^{T}$ be real vectors, and
$$W(t;\alpha):=
\left[
\begin{matrix}
t_{0}^{\alpha_{0}}&&s_{1}t_{0}^{\alpha_{1}}&&s_{2}t_{0}^{\alpha_{2}}&&\cdots&&s_{n}t_{0}^{\alpha_{n}}\\
t_{1}^{\alpha_{0}}&&t_{1}^{\alpha_{1}}&&s_{2}t_{1}^{\alpha_{2}}&&\cdots&&s_{n}t_{1}^{\alpha_{n}}\\
t_{2}^{\alpha_{0}}&&t_{2}^{\alpha_{1}}&&t_{2}^{\alpha_{2}}&&\cdots&&s_{n}t_{2}^{\alpha_{n}}\\
\vdots&&\vdots&&\vdots&&\ddots&&\vdots\\
t_{n}^{\alpha_{0}}&&t_{n}^{\alpha_{1}}&&t_{n}^{\alpha_{2}}&&\cdots&&t_{n}^{\alpha_{n}}
\end{matrix}
\right]
$$
be an $(n+1)\times(n+1)$-order real matrix, where $s_{i}\in\{-1,1\}$, $i=1,2,\cdots,n$.

\begin{lemma}[\cite{YANG2001201}]\label{lemma:1}
The generalized Vandermonde determinant
\begin{equation*}
w(t;\alpha):=w(t_{0},t_{1},\cdots,t_{n};\alpha_{0},\alpha_{1},\cdots,\alpha_{n}):=\det(W(t;\alpha))
\end{equation*}
is positive for
$$\alpha_{n}>\alpha_{n-1}>\cdots>\alpha_{1}>\alpha_{0},t_{n}\geqslant_{s_{n}}t_{n-1}\geqslant_{s_{n-1}}\cdots\geqslant_{s_{2}}t_{1}\geqslant_{s_{1}}t_{0}>0,$$
where
\[\geqslant_{s_{i}}:=\begin{cases}
\geqslant&\text{if $s_{i}=-1$}\\
>&\text{if $s_{i}=1$}
\end{cases}.\]
\end{lemma}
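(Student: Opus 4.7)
The plan is to prove the lemma by induction on $n$, with the key structural observation that the upper-triangular sign pattern $(s_j)$ is placed precisely to compensate for the permitted equalities $t_{i-1}=t_i$, which occur only when $s_i=-1$. For the base case $n=1$, direct computation of the $2\times 2$ determinant yields $w=t_0^{\alpha_0}t_1^{\alpha_1}-s_1 t_0^{\alpha_1}t_1^{\alpha_0}$: if $s_1=-1$ both summands are positive, and if $s_1=+1$ one factors out $t_0^{\alpha_0}t_1^{\alpha_0}$ and uses $t_1>t_0>0$ with $\alpha_1>\alpha_0$.

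For the inductive step I would split on the ordering of the last two nodes. In the degenerate case $t_n=t_{n-1}$ (which forces $s_n=-1$), rows $n-1$ and $n$ agree in columns $0,\ldots,n-1$ and differ only by the factor $-1$ in column $n$. The row operation $R_n\gets R_n-R_{n-1}$ produces $[0,\ldots,0,2t_n^{\alpha_n}]$, and cofactor expansion collapses the determinant to
$$w(t;\alpha)=2\,t_n^{\alpha_n}\cdot w\bigl((t_0,\ldots,t_{n-1});(\alpha_0,\ldots,\alpha_{n-1})\bigr),$$
whose reduced matrix inherits the sign pattern $s_1,\ldots,s_{n-1}$ and the required orderings, hence is positive by the inductive hypothesis.

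The main obstacle is the strict case $t_n>t_{n-1}$ (with either choice of $s_n$). My plan is to pass to the column-dependence viewpoint: if $\det W=0$, there exists a nonzero vector $d$ with $Wd=0$, which yields the indexed family of generalized polynomials $f_i(t)=\sum_{j\le i}d_j t^{\alpha_j}+\sum_{j>i}s_j d_j t^{\alpha_j}$ satisfying $f_i(t_i)=0$. Because $f_i-f_{i-1}$ is either identically $0$ (when $s_i=+1$) or $2d_i t^{\alpha_i}$ (when $s_i=-1$), the vanishing data at the nodes $t_i$ bundles into constraints on a short chain of exponential polynomials linked by single-monomial differences. Invoking the Chebyshev-system property of $\{t^{\alpha_0},\ldots,t^{\alpha_n}\}$ on $(0,\infty)$, namely that any nontrivial real-exponent combination has at most $n$ positive zeros, one chases the zero bookkeeping along the chain to force $d\equiv 0$, a contradiction. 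Combined with a sign-check at one specific configuration, e.g.\ the all-$s_i=+1$ (classical Vandermonde) regime where positivity is known, continuity over the connected admissible region then upgrades nonvanishing to positivity.

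The delicate part is this zero-counting across the chain of $f_i$'s in the presence of the asymmetric upper-triangular sign placement: one has to convert the Chebyshev bound, which naturally counts zeros of a single exponential polynomial, into a statement about several linked polynomials whose differences are forced single monomials. Establishing and bookkeeping that bound is the analytic heart of the proof, and is the step I expect to be the main obstacle.
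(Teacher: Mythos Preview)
The paper does not give its own proof of this lemma: it is quoted verbatim as a result from Yang--Wu--Zhang \cite{YANG2001201} and used as a black box. So there is no ``paper's proof'' to compare your outline against; any route you take is necessarily different from what the present paper does.

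On the merits of your outline: the base case and the degenerate reduction when $t_n=t_{n-1}$ (forcing $s_n=-1$) are correct and clean, and the idea of establishing nonvanishing in the strict case via the extended Chebyshev property of $\{t^{\alpha_0},\dots,t^{\alpha_n}\}$ is the right instinct. There is, however, a genuine gap in your sign-determination step. The determinant depends on the \emph{discrete} sign pattern $s=(s_1,\dots,s_n)$, so continuity over $(t,\alpha)$ cannot connect the matrix $W(t;\alpha)$ for a given $s$ to the classical all-$s_i=+1$ Vandermonde; these live in disconnected strata. To fix this you must produce, for each fixed $s$, a single configuration in \emph{that} stratum where the sign is known. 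One workable choice: if some $s_k=-1$, send $t_k\to t_{k-1}$; the row operation $R_k\gets R_k-R_{k-1}$ yields a single nonzero entry $2t_k^{\alpha_k}$ in column $k$, and the complementary minor is again of the form $W(t';\alpha')$ with the induced sign pattern $s'$ obtained by deleting $s_k$, positive by induction. If all $s_i=+1$ you are already in the classical case. This replaces your ``e.g.\ all $s_i=+1$'' sign-check with one that stays inside the correct stratum.

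Finally, the zero-bookkeeping for the chain $f_i$ that you flag as the main obstacle is indeed nontrivial: the linkage $f_i-f_{i-1}\in\{0,\,2d_it^{\alpha_i}\}$ means you are not counting zeros of a single exponential polynomial but of several, and the naive Descartes/Chebyshev bound does not immediately transfer. You will need an argument (e.g.\ a Rolle-type count across the blocks of consecutive $s_i=+1$, or a reduction that absorbs the monomial jumps) to close this; as written the proposal identifies the difficulty but does not resolve it.
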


\begin{theorem}
The rational {GT}-Bernstein basis  $\{\mathcal{T}_{a_{i}}(t)\}$ defined on the real nodes $S$ is a NTP basis.
\label{theorem:3}
\end{theorem}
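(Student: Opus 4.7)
The plan is to verify the three defining properties of an NTP basis in turn. Non-negativity of $\mathcal{T}_{a_i}(t)$ and partition of unity are immediate from the definition in \eqref{align 2.1} and are already recorded there, so the only real content of the theorem is that, for every strictly increasing sequence $a_0 < t_0 < t_1 < \cdots < t_n < a_n$, the collocation matrix $C = [\mathcal{T}_{a_j}(t_i)]_{i,j=0}^n$ is TP.

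I would first invoke Corollary~\ref{corollary:1} to replace $C$ by the matrix
\[
A = \left[x_i^{lk_j}\right]_{i,j=0}^{n}, \qquad x_i = \frac{t_i-a_0}{a_n-t_i}, \quad k_j = a_j - a_0,
\]
thereby converting the problem into one about a generalized Vandermonde-type matrix. From $a_0 < t_0 < \cdots < t_n < a_n$ one immediately reads off $0 < x_0 < x_1 < \cdots < x_n$, and from the strict monotonicity of the nodes $a_i$ together with $l > 0$ one gets $0 = lk_0 < lk_1 < \cdots < lk_n$. Thus both the bases and the exponents appearing in $A$ are strictly ordered in the sense required by Lemma~\ref{lemma:1}.

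The remaining step is to show that $A$ is in fact STP, which is more than enough. Fix any order $p$ and any row and column indices $i_1 < \cdots < i_p$ and $j_1 < \cdots < j_p$. The corresponding submatrix has entries $x_{i_r}^{lk_{j_s}}$ and is exactly $W(t;\alpha)$ of Lemma~\ref{lemma:1} with the specialisation $t_r \mapsto x_{i_r}$, $\alpha_s \mapsto lk_{j_s}$, and all signs $s_i = 1$. Since the restricted bases still satisfy $0 < x_{i_1} < \cdots < x_{i_p}$ and the restricted exponents still satisfy $lk_{j_1} < \cdots < lk_{j_p}$, Lemma~\ref{lemma:1} produces a positive determinant. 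Hence every minor of $A$ is positive, so $A$ is STP and in particular TP; applying Corollary~\ref{corollary:1} in the reverse direction then delivers the TP-ness of $C$, which is precisely condition (iii) of the NTP definition.

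The main obstacle, or really the only conceptual step, is recognising that the class of matrices governed by Lemma~\ref{lemma:1} is closed under passing to arbitrary submatrices: deleting rows leaves the bases strictly positive and strictly increasing, while deleting columns leaves the exponents strictly increasing. Once this is seen, the positivity of a \emph{single} generalized Vandermonde determinant automatically propagates to positivity of every minor, and the rest of the argument is routine bookkeeping on the signs and orderings.
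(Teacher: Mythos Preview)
Your argument for the interior case $a_0 < t_0 < \cdots < t_n < a_n$ is correct and is exactly the route the paper takes: reduce to $A$ via Corollary~\ref{corollary:1}, then observe that every submatrix of $A$ is again a generalized Vandermonde matrix to which Lemma~\ref{lemma:1} applies, so $A$ is STP.

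However, you have a genuine gap. The NTP definition asks for total positivity of the collocation matrix for \emph{every} increasing sequence in $\Xi=[a_0,a_n]$, and this includes the possibilities $t_0=a_0$ and/or $t_n=a_n$. Your reduction via Corollary~\ref{corollary:1} does not cover these cases: the substitution $x_i=\frac{t_i-a_0}{a_n-t_i}$ is undefined when $t_n=a_n$, and gives $x_0=0$ when $t_0=a_0$, so Lemma~\ref{lemma:1} (which requires $t_0>0$) is no longer directly applicable. The paper treats these boundary situations separately, noting that the corresponding row of the reduced matrix becomes $(1,0,\ldots,0)$ or $(0,\ldots,0,1)$, after which positivity of the minors follows by cofactor expansion along that row together with the already-established interior case. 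You need to add this case analysis; as written, the proof only establishes condition~(iii) on the open interval $(a_0,a_n)$.
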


\begin{proof}
It is obvious that rational {GT}-Bernstein basis is non-negative in the domain $\Xi=[a_{0},a_{n}]$, and has the property of partition of the unit. Then we only need to prove the TP property of it.

\begin{enumerate}[(i)]

\item  If $a_{0}<t_{0}<t_{1}<\cdots<t_{n}<a_{n}$, from Corollary \ref{corollary:1}, then we know that the TP property of collocation matrix $C$ is equivalent to the TP property of matrix $A$. We just let $s_{1}=s_{2}=\cdots=s_{n}=1$ in Lemma \ref{lemma:1}. Since $x_{i}=\frac{t_{i}-a_{0}}{a_{n}-t_{i}}$, $k_{i}=a_{i}-a_{0}$, $i=0,1,\cdots,n$, it's easy to know that $x_{i}$ and $k_{i}$ satisfy the conditions in Lemma \ref{lemma:1}, which are $0<x_{0}<x_{1}<\cdots<x_{n}$, and $k_{0}<k_{1}<\cdots<k_{n}$. So $A=\det(W(x;k))>0$, and so does all its minors.
That means matrix $C$ is equivalent to the matrix $A$ as a TP matrix.

\item If $a_{0}{=}t_{0}<t_{1}<\cdots<t_{n}<a_{n}$, then the matrix $C$ is equivalent to
$$A_{1}=
\left[
\begin{matrix}
1&0&\cdots&0\\
x_{1}^{lk_{0}}&  x_{1}^{lk_{1}}&  \cdots   &x_{1}^{lk_{n}}\\
\vdots&\vdots&\ddots&\vdots\\
x_{n}^{lk_{0}}&  x_{n}^{lk_{1}}&  \cdots   &x_{n}^{lk_{n}}
\end{matrix}
\right].
$$

According to Lemma \ref{lemma:1} and calculation rules of determinant, we have $\det(A_{1})>0$, and so does all its minors.

\item If $a_{0}<t_{0}<t_{1}<\cdots<t_{n}{=}a_{n}$, then the matrix $C$ is equivalent to

$$A_{2}=
\left[
\begin{matrix}
x_{0}^{lk_{0}}&  x_{0}^{lk_{1}}&  \cdots   &x_{0}^{lk_{n}}\\
x_{1}^{lk_{0}}&  x_{1}^{lk_{1}}&  \cdots   &x_{1}^{lk_{n}}\\
\vdots&\vdots&\ddots&\vdots\\
0&0& \cdots &1
\end{matrix}
\right].
$$

According to Lemma \ref{lemma:1} and calculation rules of determinant, we obtain $\det(A_{2})>0$, and so does all its minors.

\item If $a_{0}{=}t_{0}<t_{1}<\cdots<t_{n}{=}a_{n}$,  then the matrix $C$ is equivalent to

$$A_{3}=
\left[
\begin{matrix}
1&  0&  \cdots   &0\\
x_{1}^{lk_{0}}&  x_{1}^{lk_{1}}&  \cdots   &x_{1}^{lk_{n}}\\
\vdots&\vdots&\ddots&\vdots\\
0&0& \cdots &1
\end{matrix}
\right].
$$

According to $(\romannumeral2)$ and $(\romannumeral3)$, then $\det(A_{3})>0$, and so does all its minors.
\end{enumerate}

From all of above, we know that every collocation matrix of rational {GT}-Bernstein basis $\{\mathcal{T}_{a_{i}}(t)|i=0,1,\cdots,n\}$ on arbitrary increasing sequence in $\Xi$ is a TP matrix. Thus, rational {GT}-Bernstein basis is a NTP basis.
\end{proof}

{\section{Progressive Iterative Approximation Property of GT-B\'{e}zier Curves}

Let $\{\mathcal{B}_{0}(t),\mathcal{B}_{1}(t),\cdots,\mathcal{B}_{n}(t)\}$ be a set of blending basis defined on parameter domain $\Xi$, $\{\mathbf{P}_{0},\mathbf{P}_{0},\cdots,\mathbf{P}_{n}\}$ be a sequence of points. For each point $\mathbf{P}_{i}$, assign a parameter value $t_{i}\in\Xi,i=0,1,\cdots,n$, which satisfies $t_{0}<t_{1}<\cdots<t_{n}$. Construct an initial curve
\begin{align}
\textbf{C}^{0}(t)=\sum_{i=0}^{n}\mathbf{P}_{i}^{0}\mathcal{B}_{i}(t),\ \ t\in\Xi
\label{align 4.1}
\end{align}
with given control points $\mathbf{P}_{i}^{0}=\mathbf{P}_{i}$.
For $k=0,1,\cdots$, we calculate the $(k+1)-th$ adjustment vectors
$$\Delta_{i}^{k}=\mathbf{P}_{i}-C_{i}^{k}(t_{i}),\ \ i=0,1,\cdots,n, $$
and then let
$$\mathbf{P}_{i}^{k+1}=\mathbf{P}_{i}^{k}+\Delta_{i}^{k},\ \ i=0,1,\cdots,n.$$
Thus, we have the $(k+2)-th$ iterative curve
$$\textbf{C}^{k+1}(t)=\sum_{i=0}^{n}\mathbf{P}_{i}^{k+1}\mathcal{B}_{i}(t),\ \ t\in\Xi.$$

Keep doing this iteration process, we can get $\{\textbf{C}^{k}(t)|k=0,1,\cdots\}$ as sequence of curves. The initial curve \eqref{align 4.1} has the PIA property \cite{LIN2005575} if it satisfies
$$\lim_{k\rightarrow\infty}\textbf{C}^{k}(t_{i})=\mathbf{P}_{i}^{0}$$
for $i=0,1,\cdots,n$.

In \cite{LIN2005575}, Lin et al. proposed a simple method to identify the PIA property of a curve by its blending basis.
\begin{theorem}\cite{LIN2005575}
The curve (\ref{align 4.1}) has the progressive iterative approximation property if its blending basis $\{\mathcal{B}_{i}(t)\}$ is normalized totally positive basis.
\label{theorem:4}
\end{theorem}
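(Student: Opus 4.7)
The plan is to cast the iteration in matrix form and reduce the PIA claim to a spectral-radius bound on the collocation matrix, which the NTP hypothesis will then supply.

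First I would stack the control points at step $k$ into a column $\mathbf{P}^k = (\mathbf{P}_0^k, \ldots, \mathbf{P}_n^k)^\top$ (treating each Cartesian coordinate independently) and let $B = [\mathcal{B}_j(t_i)]_{i,j=0}^n$ be the $(n+1)\times(n+1)$ collocation matrix of the blending basis on the strictly increasing parameters $t_0 < \cdots < t_n$. The update rule $\mathbf{P}_i^{k+1} = \mathbf{P}_i^k + \mathbf{P}_i - \mathbf{C}^k(t_i)$ then becomes $\mathbf{P}^{k+1} = (I - B)\mathbf{P}^k + \mathbf{P}^0$. Introducing the error vector $\mathbf{E}^k := \mathbf{P}^0 - B\mathbf{P}^k$, whose $i$-th entry is precisely $\mathbf{P}_i^0 - \mathbf{C}^k(t_i)$, a one-line substitution gives the linear recursion $\mathbf{E}^{k+1} = (I-B)\mathbf{E}^k$, and hence $\mathbf{E}^k = (I-B)^k \mathbf{E}^0$. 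Consequently, the PIA assertion $\mathbf{C}^k(t_i) \to \mathbf{P}_i^0$ for every $i$ is equivalent to $(I-B)^k \to 0$, which in turn is equivalent to $\rho(I - B) < 1$.

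Second, I would extract the spectrum of $B$ from the NTP hypothesis. Because the basis is non-negative, totally positive, and evaluated on strictly increasing nodes, $B$ is totally non-negative; the classical Gantmacher-Krein theorem then ensures that every eigenvalue of $B$ is real and non-negative. The partition-of-unity condition gives $B\mathbf{1} = \mathbf{1}$, so $\lambda = 1$ is an eigenvalue of $B$, and the row-stochastic structure (non-negative entries with unit row sums) forces $\rho(B) \le \|B\|_\infty = 1$. Therefore every eigenvalue $\lambda$ of $B$ lies in $[0,1]$, and every eigenvalue of $I - B$ lies in $[0,1]$.

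The main obstacle is ruling out the boundary eigenvalue $\mu = 1$ of $I - B$, which would correspond to a zero eigenvalue of $B$ and would prevent the geometric decay of $(I-B)^k$. I would dispose of this by invoking nonsingularity of the collocation matrix of an NTP blending basis at strictly increasing nodes: for the rational GT-Bernstein setting treated in the previous section this is immediate from the strict positivity of the generalized Vandermonde determinant guaranteed by Lemma~\ref{lemma:1} (and used in Theorem~\ref{theorem:3}), while in general it is the standard consequence of the linear independence built into a blending basis. A nonsingular totally non-negative matrix has strictly positive eigenvalues (Gantmacher-Krein once more), so the spectrum of $B$ lies in $(0,1]$, the spectrum of $I - B$ lies in $[0,1)$, and hence $\rho(I - B) < 1$. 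It follows that $(I - B)^k \to 0$, so $\mathbf{E}^k \to 0$, which is exactly the PIA property claimed.
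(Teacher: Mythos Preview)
The paper does not supply its own proof of Theorem~\ref{theorem:4}; the result is quoted from \cite{LIN2005575} and used as a black box to obtain Corollary~2. Your argument is essentially the proof given in that reference: rewrite the iteration as $\mathbf{E}^{k+1}=(I-B)\mathbf{E}^k$, use total non-negativity plus row-stochasticity of the collocation matrix $B$ to place its spectrum in $[0,1]$, and then invoke nonsingularity to sharpen this to $(0,1]$, whence $\rho(I-B)<1$ and $(I-B)^k\to 0$.

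The one soft spot is the nonsingularity step in full generality. You justify $\det B\neq 0$ by ``linear independence built into a blending basis,'' but linear independence of the functions $\mathcal{B}_i$ does not by itself force the collocation matrix at an arbitrary node set to be nonsingular, and the NTP definition used here (Definition~4) only requires collocation matrices to be TP, which allows $\det B=0$. In \cite{LIN2005575} nonsingularity is effectively an additional hypothesis. For the application in this paper the point is moot, since---as you correctly observe---Theorem~\ref{theorem:3} (via Lemma~\ref{lemma:1}) shows the GT-Bernstein collocation matrix has strictly positive determinant, so $B$ is nonsingular and your argument goes through cleanly in the case that matters.
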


 In \cite{Li}, a kind of parametric curve is constructed by rational GT-Bernstein basis (\ref{align 2.1}).
\begin{definition}\cite{Li}
The curve
\begin{align}
\textbf{P}(t)&=\sum_{i=0}^{n}\mathbf{P}_{a_{i}}\frac{\omega_{a_{i}}\beta_{a_{i}}(t)}{\sum_{i=0}^{n}\omega_{a_{i}}\beta_{a_{i}}(t)}=\sum_{i=0}^{n}\mathbf{P}_{a_{i}}\mathcal{T}_{a_{i}}(t), \ \ t\in[a_{0},a_{n}]
\label{align 4.2}
\end{align}
is called generalized toric-B\'{e}zier curve (GT-B\'{e}zier curve) of degree $n$, where $\{\mathbf{P}_{a_{0}},\mathbf{P}_{a_{1}},\cdots\mathbf{P}_{a_{n}}\}\subset \mathbb{R}^{3}$ are control points corresponding to nodes $S=\{a_{0},a_{1},\cdots,a_{n}\}$.
\end{definition}

From \cite{Li}, we know that GT-B\'{e}zier curve preserves some properties with B\'{e}zier curve, such as endpoint interpolation, geometric invariance, affine invariance and so on. According to Theorem \ref{theorem:3}, rational GT-Bernstein basis (\ref{align 2.1}) is NTP basis, and this leads to the PIA property of the GT-B\'{e}zier curve by Theorem \ref{theorem:4}.

\begin{corollary}
GT-B\'{e}zier curve (\ref{align 4.2}) has progressive iterative approximation property.
\end{corollary}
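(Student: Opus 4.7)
The plan is to obtain the corollary as an immediate consequence of two results already available in the paper: Theorem \ref{theorem:3}, which establishes that the rational GT-Bernstein basis $\{\mathcal{T}_{a_{i}}(t)\}$ is an NTP basis on $[a_{0},a_{n}]$, together with Theorem \ref{theorem:4}, which asserts that any curve built by blending control points against an NTP basis automatically has the PIA property. So the corollary is really a one-line reduction; the work has all been done upstream.

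First I would match notation: identify the abstract blending basis $\{\mathcal{B}_{i}(t)\}$ of Theorem \ref{theorem:4} with the rational GT-Bernstein basis $\{\mathcal{T}_{a_{i}}(t)\}$ from (\ref{align 2.1}), and set the parameter domain $\Xi=[a_{0},a_{n}]$. Then the GT-B\'ezier curve (\ref{align 4.2}) is literally the initial curve (\ref{align 4.1}) of the PIA scheme, with initial control points $\mathbf{P}_{i}^{0}=\mathbf{P}_{a_{i}}$.

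Next I would verify the hypothesis of Theorem \ref{theorem:4}: that the blending basis is NTP. Non-negativity and partition of unity of $\{\mathcal{T}_{a_{i}}(t)\}$ are built into (\ref{align 2.1}) and were already noted after that equation, and the total positivity of every collocation matrix on an increasing node sequence $t_{0}<t_{1}<\cdots<t_{n}$ in $[a_{0},a_{n}]$ is exactly the content of Theorem \ref{theorem:3}. Therefore Theorem \ref{theorem:4} applies directly to the curve $\mathbf{P}(t)$ in (\ref{align 4.2}), yielding $\lim_{k\to\infty}\mathbf{P}^{k}(t_{i})=\mathbf{P}_{a_{i}}$ for each $i$, which is the PIA property.

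There is essentially no serious obstacle, since all nontrivial analytic work---reducing the collocation matrix of the GT-Bernstein basis to a generalized Vandermonde matrix $A$, passing from the GT-Bernstein to its rational normalization via Corollary \ref{corollary:1}, and invoking Lemma \ref{lemma:1} to get positivity of all minors in the interior and boundary cases---was already carried out in Theorem \ref{theorem:2}, Corollary \ref{corollary:1}, and Theorem \ref{theorem:3}. The only minor point worth stating explicitly in the write-up is that the parameter assignment $t_{0}<t_{1}<\cdots<t_{n}$ required by the PIA scheme can be taken inside $[a_{0},a_{n}]$ (and that the endpoint cases $t_{0}=a_{0}$ or $t_{n}=a_{n}$ are also covered, since cases (ii)--(iv) of the proof of Theorem \ref{theorem:3} treat exactly those situations). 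Beyond this observation, the corollary follows in a single sentence.
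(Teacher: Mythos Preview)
Your proposal is correct and matches the paper's approach exactly: the paper derives the corollary in one sentence by combining Theorem~\ref{theorem:3} (the rational GT-Bernstein basis is NTP) with Theorem~\ref{theorem:4} (NTP blending basis $\Rightarrow$ PIA), with no additional argument supplied.
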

Next, we will illustrate two examples to show the PIA property of GT-B\'{e}zier curve.

\begin{example}
We sample five points $\{\mathbf{P}_{0},\mathbf{P}_{1},\cdots,\mathbf{P}_{4}\}$ from the circle $(x(t),y(t))=(cos(t),sin(t))$ as
 $$\mathbf{P}_{i}=(x(\xi_{i}),y(\xi_{i})), \ \ \xi_{i}=i\times\frac{\pi}{4},i=0,1,2,4, \xi_{3}=\pi\times\frac{\pi}{4}.$$
 We set
 $$t_{i}=i\times\frac{\pi}{4},\ \ i=0,1,2,4,\ \  t_{3}=\pi\times\frac{\pi}{4}.$$
 As shown in Figure \ref{subfig1:a}, the GT-B\'{e}zier curve (red solid), B\'{e}zier curve (blue dash) and rational B\'{e}zier curve (green dot) are selected as initial curves. For GT-B\'{e}zier curve, we set $S=\{a_{i}=t_i|i=0,1,\cdots4\}$, $\{c_{a_{i}}|i=0,1,\cdots4\}=\{1,0.9,0.8,0.9,1\}$ and $l=4.5$. The GT-B\'{e}zier curve and rational B\'{e}zier curve are assigned with the same weights $\{0.5,2.51,5.5,2.51,0.22\}$. Figure \ref{subfig1:b}-\ref{subfig1:d} show the effects of these three curves with different iterations and the Table \ref{table 1} show the iteration errors.

\begin{figure}[!h]
\begin{center}
\subfigure[Initial curves]{
\label{subfig1:a}
\includegraphics[width=6cm]{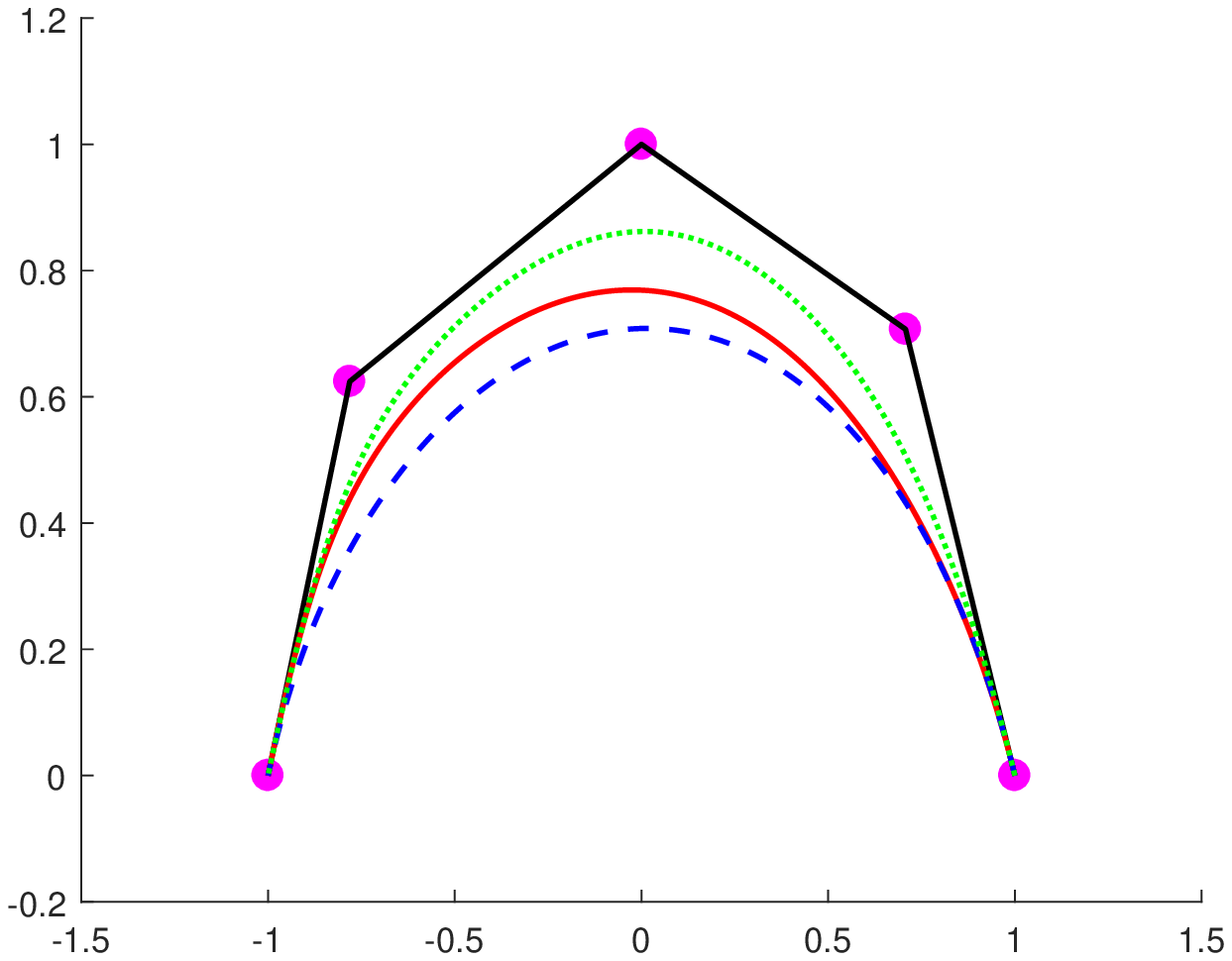}}
\hspace{0.5cm}
\subfigure[5 iterations]{
\label{subfig1:b}
\includegraphics[width=6cm]{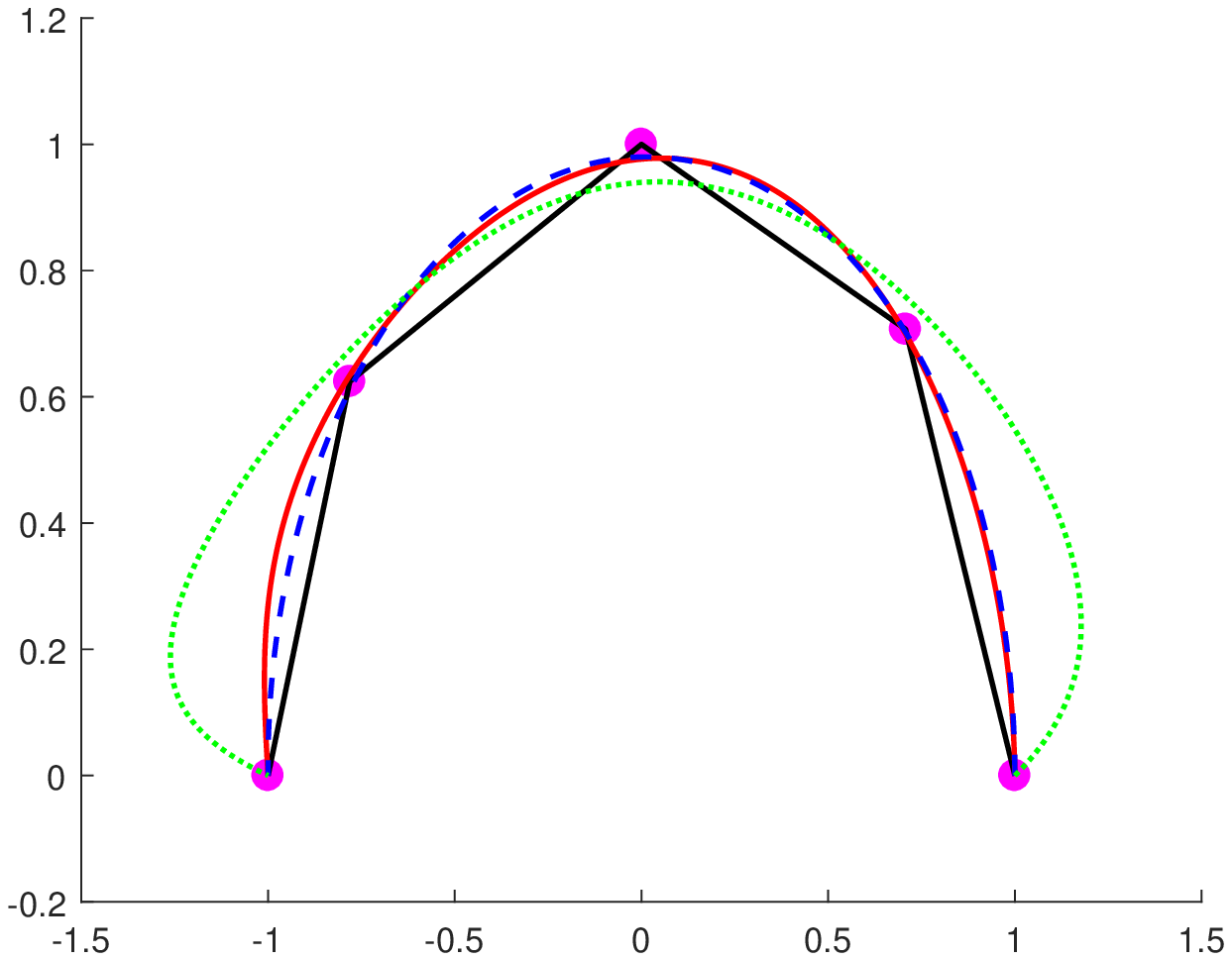}}
\subfigure[10 iterations]{
\label{subfig1:c}
\includegraphics[width=6cm]{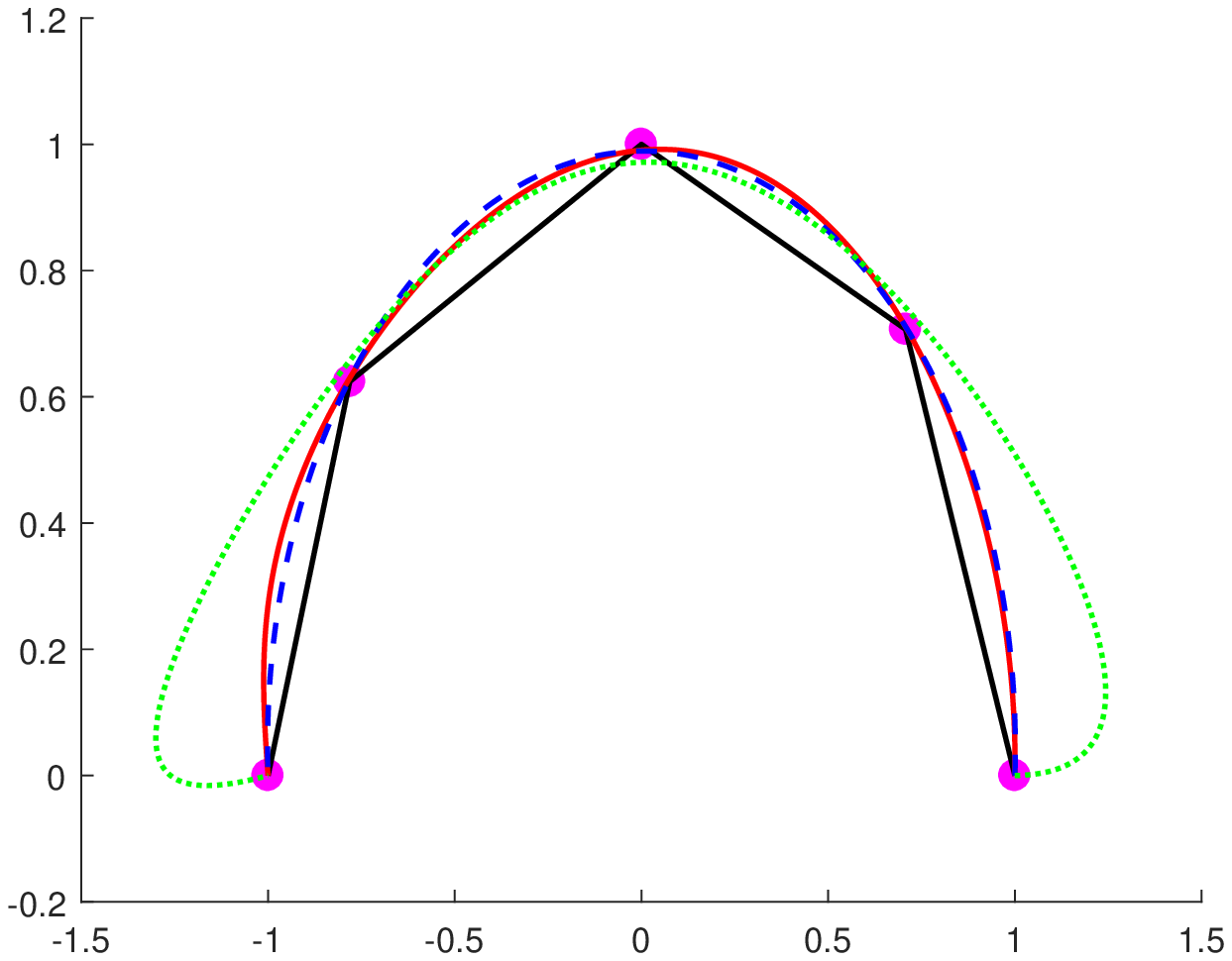}}
\hspace{0.5cm}
\subfigure[20 iterations]{
\label{subfig1:d}
\includegraphics[width=6cm]{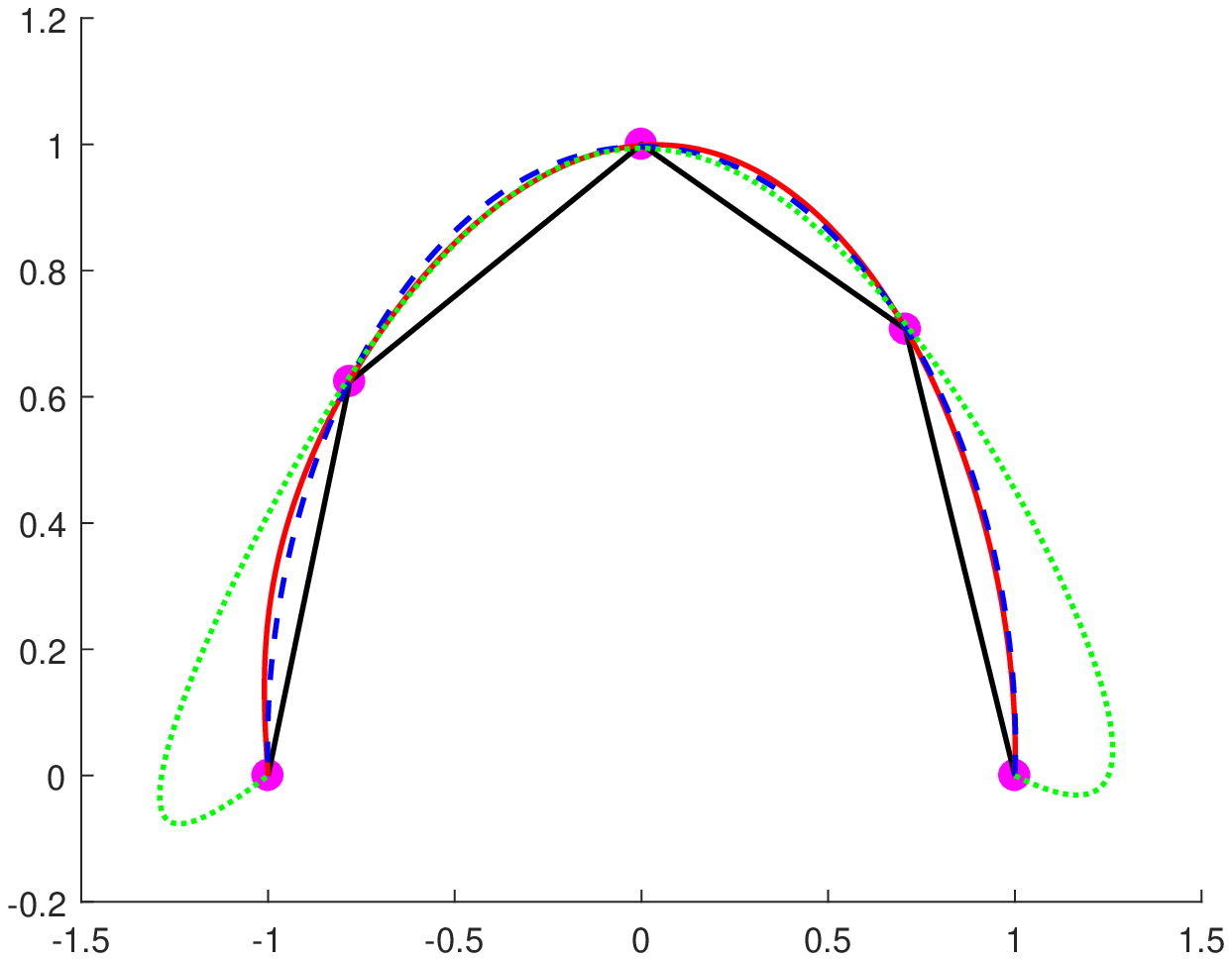}}
\caption{PIA effects of three curves with different iterations}
\label{figure 1}
\end{center}
\end{figure}

\begin{table}[!h]
\begin{center}
\setlength{\tabcolsep}{3mm}{
\begin{tabular}{ccccc}
\toprule
$Iterations$&$1$&$5$&$10$&$20$\\
\midrule
GT-B\'{e}zier curve&$2.317e-01$&$2.236e-02$&$9.7e-03$&$1.8e-03$\\
\midrule
B\'{e}zier curve&$2.927e-01$&$2.228e-02$&$1.08e-02$&$3.6e-03$\\
\midrule
Rational B\'{e}zier curve&$4.134e-01$&$9.84e-02$&$4.16e-02$&$9.4e-03$\\
\bottomrule
\end{tabular}}
\end{center}
\caption{\label{table 1}Errors of three curves with different iterations}
\end{table}
\end{example}

\begin{example}
We sample 31 points $\{\mathbf{P}_{i}=(x(\xi_{i}),y(\xi_{i}),z(\xi_{i}))|i=0,1,\cdots,30\}$ from the parametric curve
$$(x(t),y(t),z(t))=\left(cos(\pi t),sin(\pi t),\frac{t}{6}\right), \ \ t\in[0,2\pi],$$
where
$$\xi_{i}=i\times\frac{2\pi}{30},\ \ i=0,1,\cdots,30\ i\neq 3,\ \ \xi_{3}=\pi\times\frac{2\pi}{30}.$$
We choose
$$t_{i}=i\times\frac{2\pi}{30},\ \ i=0,1,\cdots,30,\ i\neq3,\  \ t_{3}=\pi\times\frac{2\pi}{30}$$
and select GT-B\'{e}zier curve(red solid) and B\'{e}zier curve(blue dot) as initial curves (see Figure \ref{subfig:a}). For GT-B\'{e}zier curve, we set $S=\{a_{i}=t_i|i=0,1,\cdots,30\}$, $\{\omega_{a_{i}}=\binom{31}{i}|i=0,1,\cdots,30\}$, $\{c_{a_{i}}=\frac{1}{30^{2}}|i=0,1,\cdots,30\}$, and $l=31.1$. The effects of
these two curves with different iterations are shown in Figure \ref{subfig:b}-\ref{subfig:d} and the iteration errors are shown in Table \ref{table 2}.

\begin{figure}[!h]
\begin{center}
\subfigure[Initial curves]{
\label{subfig:a}
\includegraphics[width=6cm]{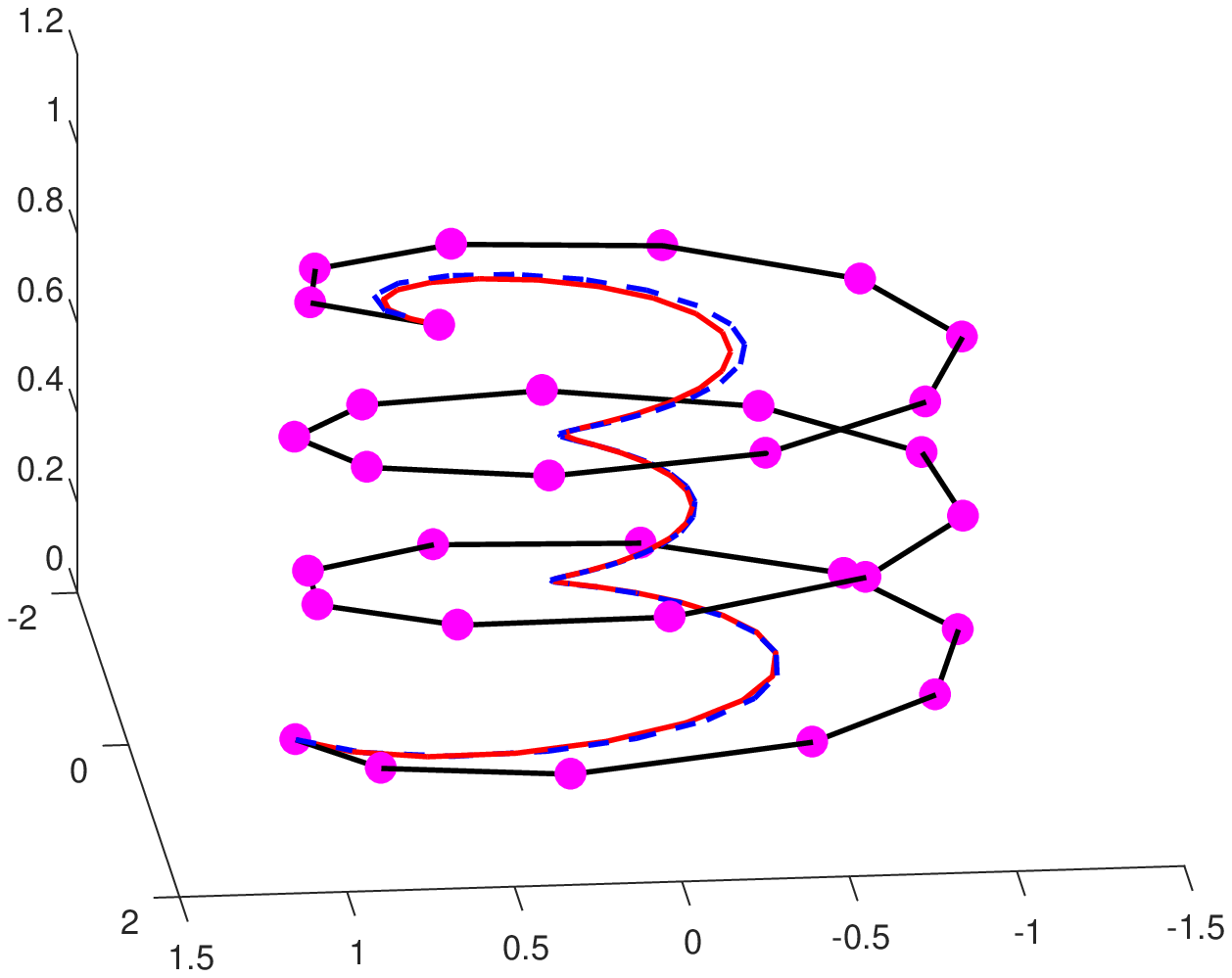}}
\hspace{0.5cm}
\subfigure[10 iterations]{
\label{subfig:b}
\includegraphics[width=6cm]{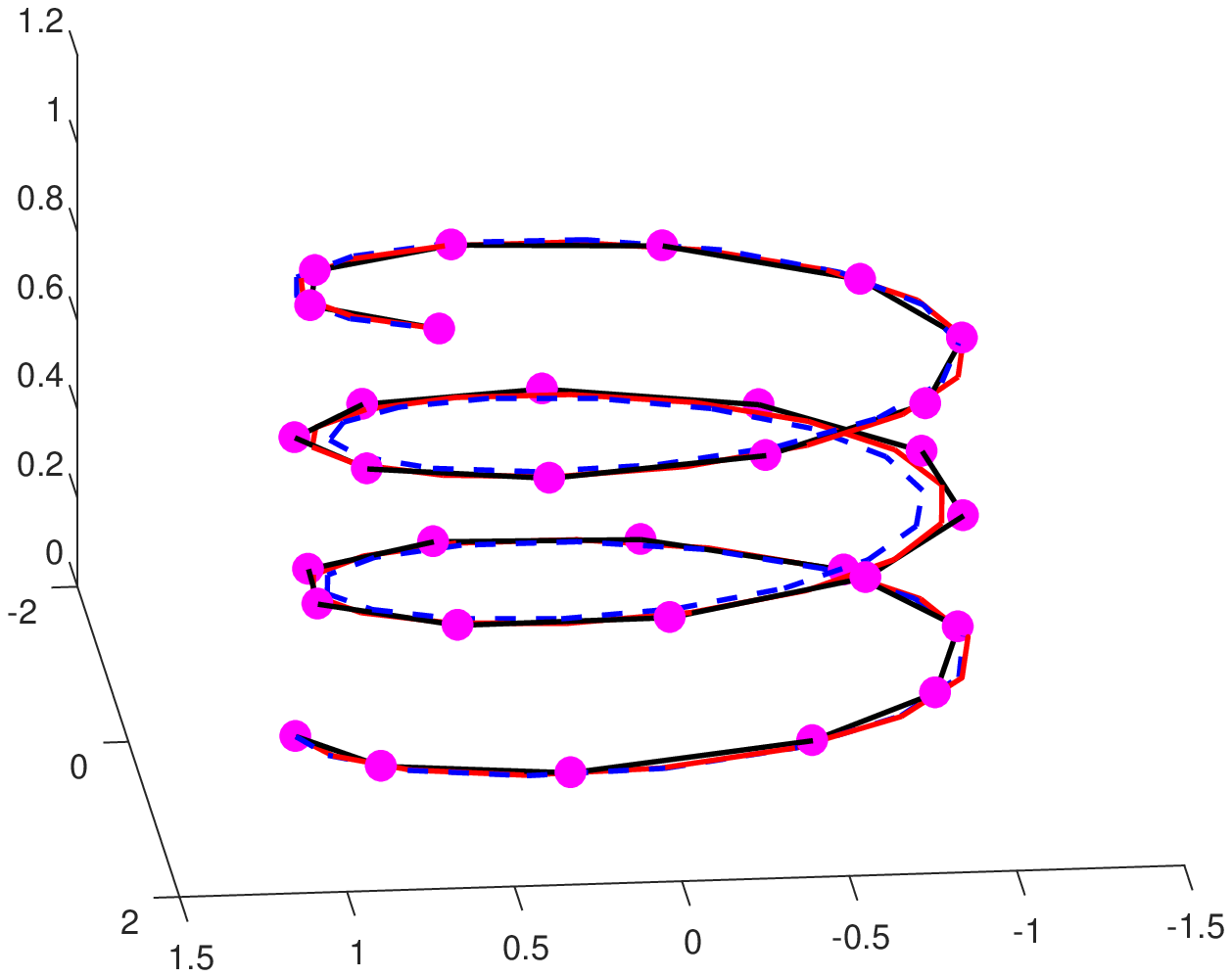}}
\subfigure[20 iterations]{
\label{subfig:c}
\includegraphics[width=6cm]{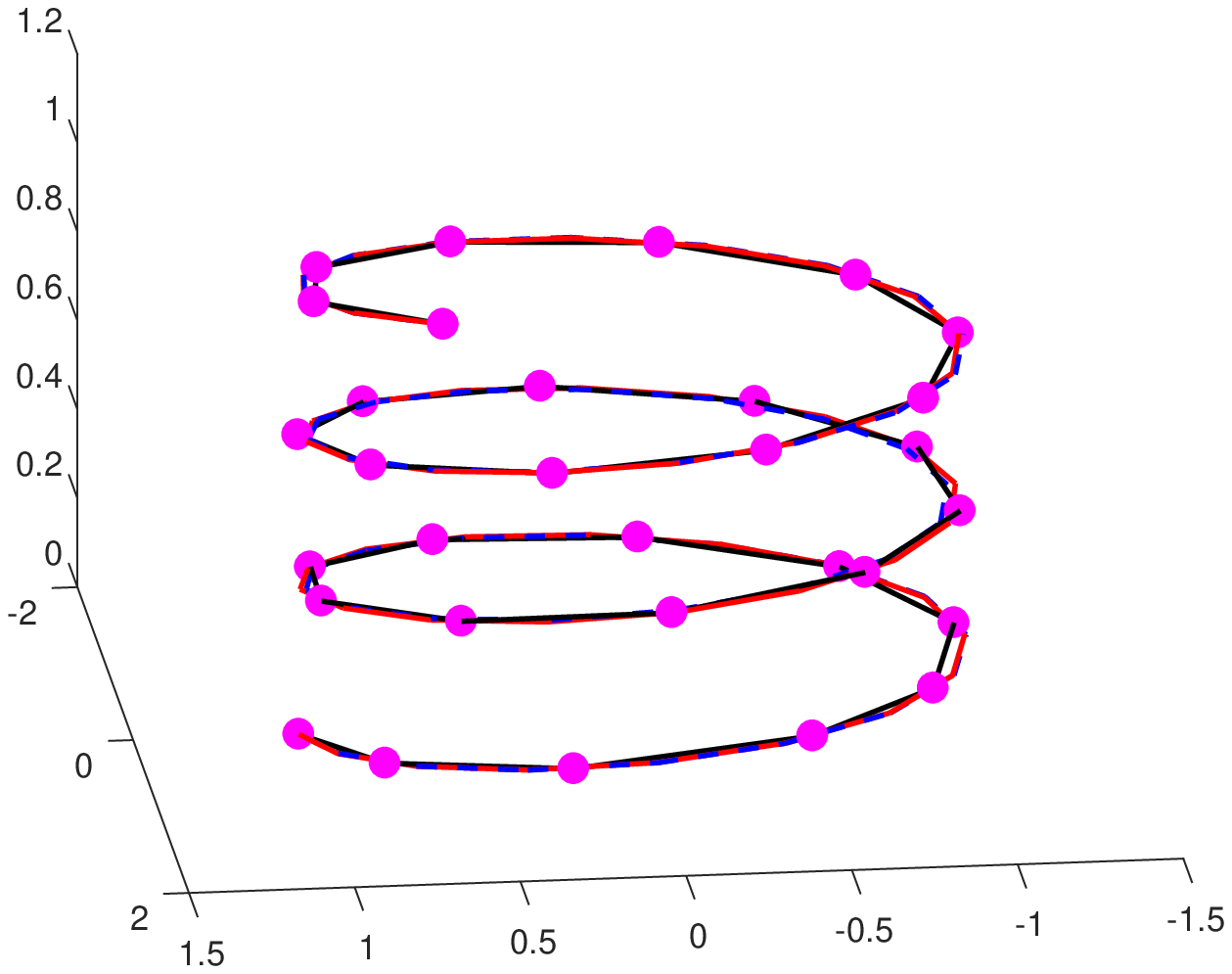}}
\hspace{0.5cm}
\subfigure[30 iterations]{
\label{subfig:d}
\includegraphics[width=6cm]{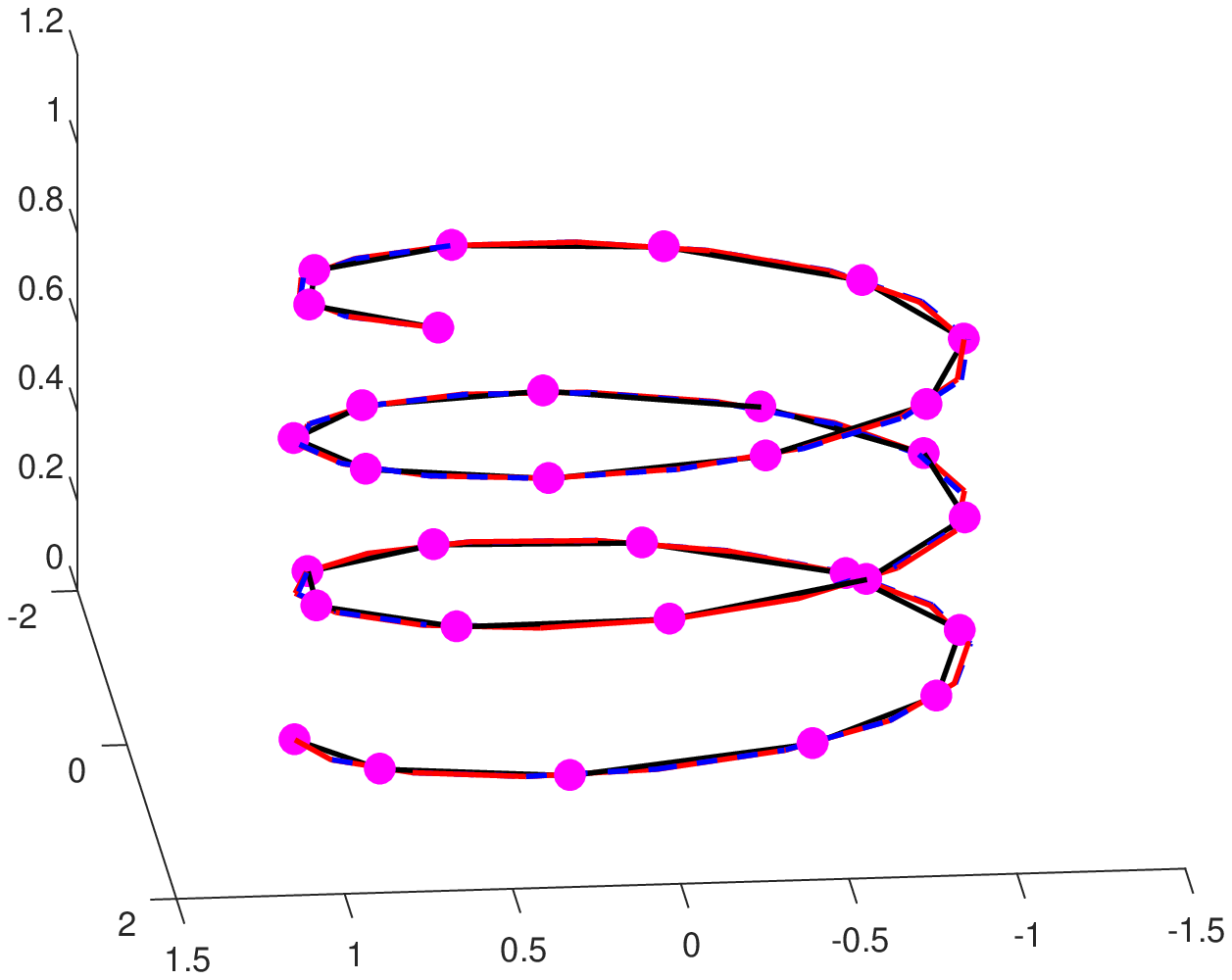}}
\caption{PIA effects of two curves with different iterations}
\label{figure 2}
\end{center}
\end{figure}

\begin{table}[!h]
\begin{center}
\setlength{\tabcolsep}{3mm}{
\begin{tabular}{ccccc}
\toprule
$Iterations$&$1$&$10$&$20$&$30$\\
\midrule
GT-B\'{e}zier curve&$8.390e-01$&$8.92e-02$&$1.90e-02$&$8.7e-03$\\
\midrule
B\'{e}zier curve&$8.086e-01$&$1.245e-01$&$3.86e-02$&$1.84e-02$\\
\bottomrule
\end{tabular}}
\end{center}
\caption{\label{table 2}Errors of two curves with different iterations}
\end{table}

\end{example}

}


\section{Conclusions}
Based on the result of the generalized Vandermonde determinant defined on the set of real points, we prove that the collocation matrix of rational {GT}-Bernstein basis is a TP matrix, and then the rational {GT}-Bernstein basis is NTP basis defined on the finitely real points
$$S={\{a_{0},\cdots,a_{n}\}}\subset \mathbb{R}.$$
{This means that the GT-B\'{e}zier curve defined has the PIA property.

In \cite{Li1}, they {also construct the so called} GT-B\'{e}zier surface by {two dimensional} GT-Bernstein basis.} We conjecture that the rational toric-Bernstein basis defined on two (or higher)-dimensional real points is also NTP basis, and then {GT}-B\'{e}zier surface also has the PIA property. To prove that is our future work.

\section*{Acknowledgements}
{The authors are grateful to the reviewers for their helpful comments and suggestions.}
 This work is partly supported by the National Natural Science Foundation of China (Nos. 11671068, 11271060).

\section*{References}
\bibliographystyle{elsarticle-num}

\begin{thebibliography}{00}

\bibitem{Goodman1986A}G. S. Goodman,
A {P}robabilistic {R}epresentation of {T}otally {P}ositive {M}atrices,
Academic Press,
1986.



\bibitem{Heiligers1994Totally}B. Heiligers,
Totally {N}onnegative {M}oment {M}atrices,
Linear Algebra and Its Applications,
199(1994), 213-227.


\bibitem{Mariano1996Total}M. Gasca, C. A. Micchelli,
Total {P}ositivity and {I}ts {A}pplications,
Journal of the American Statistical Association,
92(440)(1996), 16-54.

\bibitem{Berenstein1996Parametrizations}A. Berenstein, S. Fomin and A. Zelevinsky,
Parametrizations of {C}anonical {B}ases and {T}otally {P}ositive {M}atrices,
Advances in Mathematics,
122(1)(1996), 49-149.

\bibitem{Brenti1995Combinatorics}B. Francesco,
Combinatorics and {T}otal {P}ositivity,
Academic Press,
1995.


\bibitem{Gantmacher2002Oscillation}F. R. Gantmacher, M. G. Krein,
Oscillation {M}atrices and {K}ernels and {S}mall {V}ibrations of {M}echanical {S}ystems, {R}evised {E}dition,
American Mathematical Society,
2002.

\bibitem{Price1968Monotone}H. S. Price,
Monotone and {O}scillation {M}atrices {A}pplied to {F}inite {D}ifference {A}pproximations,
Mathematics of Computation,
103(22)(1968), 489-516.


\bibitem{CARNICER1994633}J. M. Carnicer, J. M. Pe\^{n}a,
Totally {P}ositive {B}ases for {S}hape {P}reserving {C}urve {D}esign and {O}ptimality of {B}-splines,
Computer Aided Geometric Design,
11(6)(1994), 633-654.


\bibitem{Sobolev1975Totally}A. V. Sobolev,
Totally {P}ositive {O}perator,
Siberian Mathematical Journal,
16(4)(1975), 636-641.

\bibitem{Sturmfels1988Totally}B. Sturmfels,
Totally {P}ositive {M}atrices and {C}yclic {P}olytopes,
Linear Algebra and Its Applications,
 107(1)(1988), 275-281.

\bibitem{Schoenberg1930}I. J. Schoenberg,
\"{U}ber variationsvermindernde lineare Transformationen,
Mathematische Zeitschrift,
32(1930), 321-328.

\bibitem{ANDO1987165}T. Ando,
Totally {P}ositive {M}atrices,
Linear Algebra and {I}ts Applications,
90(1987), 165-219.

\bibitem{Karlin}S. Karlin,
Totally {P}ositivity,
Stanford University Press,
1968.


\bibitem{Cambridge}B. Bollobas, W. Fulton, A. Katok, F. Kirwan, P. Sarnak, B. Simon, B. Totaro,
Totally {P}ositive {M}atrices,
Cambridge University Press,
2010.

\bibitem{CARNICER1996365}J. M. Carnicer, M. Garc\'{\i}a-Esnaola, J. M. Pe\~{n}a,
Convexity of {R}ational {C}urves and {T}otal {P}ositivity,
Journal of Computational and Applied Mathematics,
71(2)(1996), 365-382.


\bibitem{LIN2005575}H.W. Lin, H.J. Bao, G.J. Wang,
Totally {P}ositive {B}ases and {P}rogressive {I}teration {A}pproximation,
Computers \&\ Mathematics with Applications,
50(3)(2005), 575-586.

\bibitem{Zhang2014}L. Zhang, H. Wang, Y.Y. Li, J.Q. Tan,
A {P}rogressive {I}terative {A}pproximation {M}ethod in {O}ffset {A}pproximation,
Journal of Computer-Aided Design \&\ Computer Graphics,
26(2014), 1646-1653.












{

\bibitem{LIN}H.W. Lin,
Local {P}rogressive {I}terative {A}pproximation {F}ormat for {B}lending {C}urves and {P}atches,
Computer Aided Geometric Design,
27(2010), 322-339.

\bibitem{LU}L.Z. Lu,
Weighted {P}rogressive {I}terative {A}pproximation and {C}onvergence {A}nalysis,
Computer Aided Geometric Design,
27(2010), 129-137.

\bibitem{Chen}J. Chen, G.J. Wang, C.J. Jin,
Two {K}ind of {G}eneralized {P}rogressive {I}terative {A}pproximations,
Acta Automatica Sinica,
38(1)(2012), 135-139.

\bibitem{LIN2}H.W. Lin,
Survey on Geometric Iterative Methods with Applications,
Journal of Computer-Aided Design \&\ Computer Graphics,
27(4)(2015), 582-589.

\bibitem{LIN3}H.W. Lin, T. Maekawa, C.Y Deng,
Survey on Geometric Iterative Methods and Their Applications,
Computer-Aided Design,
95(2018) 40-51.

\bibitem{Zhang}L. Zhang, J.Q. Tan, X.Y. Ge, G. Zheng,
Generalized {B}-splines' {G}eometric {I}terative {F}itting {M}ethod with {M}utually {D}ifferent {W}eights,
Journal of Computational and Applied Mathematics,
329(2018), 331-343.}

\bibitem{Krasauskas2002}R. Krasauskas,
Toric {S}urface {P}atches,
Advances in Computational Mathematics,
17(1)(2002), 89-113.

\bibitem{Postinghel2015Degenerations}E. Postinghel, F. Sottile, N. Villamizar,
Degenerations of {R}eal {I}rrational {T}oric {V}arieties,
Journal of the London Mathematical Society,
92(2)(2015), 223-241.

\bibitem{Li}J.G. Li, Q.Y. Chen, J.Q. Han, Q.L. Huang, C.G. Zhu,
A {N}ew {K}ind of {P}arametric {C}urves by {S}pecial {B}asis {F}unction,
Computer Science,
45(3)(2018), 46-50.

\bibitem{Li1}J.G. Li, C.G. Zhu,
Curve and {S}urface {C}onstruction {B}ased on the {G}eneralized {T}oric-{B}ernstein {B}asis {F}unctions,
Preprint.




\bibitem{Sottile2008Toric}F. Sottile,
Toric {I}deals, {R}eal {T}oric {V}arieties, and the {A}lgebraic {M}oment {M}ap,
Topics in Algebraic and Geometric Modeling, R. Goldman and R. Krasauskas(Eds.),  Contemp. Math,
334(2003), 225-240.



\bibitem{Gronber}B. Sturfmels,
Gr\"{o}bner {B}ases and {C}onvex {P}olytopes,
American Mathematical Society,
1995.

\bibitem{YANG2001201}S.J. Yang, H.Z. Wu, Q.B. Zhang,
Generalization of {V}andermonde {D}eterminants,
Linear Algebra and Its Applications,
336(1)(2001), 201-204.



\end{thebibliography}

\end{document}